\documentclass[preprint,xcolor=pst,dvips,11pt]{elsarticle}
\usepackage{vaucanson-g}
\biboptions{sort&compress}


\usepackage{tikz}

\input{Enfichier.sty}

\journal{Journal of Discrete Algorithms}


\begin{document}
\begin{frontmatter}

\title{Extended to Multi-Tilde-Bar Regular Expressions and Efficient Finite Automata Constructions}
 \author[label1]{Jean-Marc Champarnaud}
  \author[label]{Faissal Ouardi\tnoteref{label2}}
   \author[label1]{Djelloul Ziadi }
 \address[label1]{LITIS, University of Rouen, 76821 Saint-Etienne-du-Rouvray, France}
  \address[label]{Department of Computer Science, Faculty of Science, Mohammed V-Agdal University, Morocco}
\tnotetext[label2]{Corresponding author. E-mail address: \texttt{ouardi@fsr.ac.ma}}





\begin{abstract}
Several algorithms have been designed to convert a regular expression into an equivalent finite automaton.
One of the most popular constructions, due to Glushkov
and to McNaughton and Yamada,
is based on the computation of the $\nul$, $\First$, $\Last$ and $\Follow$ sets (called Glushkov functions) associated with a 
linearized version of the expression.
Recently Mignot
considered a family of extended expressions called Extended to multi-tilde-bar Regular Expressions (EmtbREs)
and he showed that, under some restrictions, Glushkov functions  can be defined for an EmtbRE.
In this paper we present an algorithm which efficiently computes the Glushkov functions of an unrestricted EmtbRE. 
Our approach is based on a recursive definition of the language associated with an EmtbRE
which enlightens the fact that the worst case time complexity of the conversion of an EmtbRE into an automaton
is related to the worst case time complexity of the computation of the $\nul$ function. 
Finally
we show how to extend the \ZPC-structure
to EmtbREs, which allows us to apply to this family of extended expressions the efficient constructions based on this structure
(in particular the construction of the c-continuation automaton, the position automaton, the follow automaton and the equation automaton).

\end{abstract}

\begin{keyword}
Regular Expressions and languages
 \sep Finite automata \sep
Computation Complexity 


\end{keyword}
\end{frontmatter}
\section{Introduction}
According to Kleene's theorem \cite{Kln}, regular expressions and finite automata are two equivalent representations of regular languages.
The conversion from a representation into the other one raised numerous research works.
Concerning the conversion of a regular expression into a finite automaton we can cite the following references:
\cite{Ant,BeSe,BePi,Brz,Brg,CNZ,CZ,CPZ,Glush,IlYu,MnYa,ZPC},
for which a common aim is to reduce the space and/or worst case time complexity of the result of the conversion.  
In this paper we are particularly interested by the implementation of conversion algorithms which are based on the notion of position,
such as the five first ones in the above list.
Following \cite{Glush,MnYa}, these algorithms are based on the the computation of the Null, First, Last and Follow sets (called Glushkov functions)
associated with a linearized version of the expression.
Recently Mignot \cite{MiT}
considered a family of extended expressions called Extended to multi-tilde-bar Regular Expressions (EmtbREs)
and he showed that, under some restrictions, the Glushkov functions can be defined for an EmtbRE (see also \cite{CCMacta,CCMtcs}).
In this paper we present an algorithm which efficiently computes the Glushkov functions of an unrestricted EmtbRE. 
Our approach is based on a recursive definition of the language associated with an EmtbRE
which enlightens the fact that worst case time complexity of the conversion of an EmtbRE into an automaton
is related to the worst case time complexity of the computation of the Null function. 
Finally
we show how to extend the \ZPC-structure \cite{ZPC}
to EmtbREs, which allows us to apply to this family of extended expressions the efficient constructions based on this structure 
(in particular the construction of the c-continuation automaton \cite{CZ}, the position automaton \cite{ZPC},
the follow automaton \cite{CNZ} and the equation automaton \cite{CZ,KOZ}).

The structure of the paper is as follows.
In Section 2, we recall some basic definitions concerning regular expressions and finite automata,
and we recall the notion of multi-tilde-bar expression.
New properties concerning the language of a multi-tilde-bar expression are stated in Section 3.
In Section 4, we give the definition of the position automaton associated with an arbitrary multi-tilde-bar expression.
Section 5 is devoted to an efficient computation of the position automaton of an EmtbRE, through the extension of the notion of
\ZPC-structure of a regular expression.
 
\section{Preliminaries}

\subsection{Regular expressions and finite automata}
Let $A$ be a non-empty finite set of symbols,
called an {\it alphabet}. The set of all the words over $A$ is denoted by $A^*$. The empty
word is denoted by $\varepsilon$. A {\it language} over $A$ is a subset of $A^*$. {\it Regular expressions}
over an alphabet $A$ and regular languages that they denote are inductively defined
as follows:
\begin{itemize}
\item $\emptyset$ is a regular expression denoting the language $\LL(\emptyset) = \emptyset$. 
\item $x$, for all $x\in A\cup \{\varepsilon\}$, is a regular expression 
denoting the language $\LL(x) = \{x\}$. 
\item Let $\f$ (resp. $\g$) be a regular expression denoting the language $\LL(\f )$ (resp.
$\LL(\g)$); then we have:
\begin{itemize}
\item {$\footnotesize{(\f +\g)}$} is a regular expression denoting the language  \\ {$\footnotesize{\LL(\f+\g) = \LL(\f)\cup\LL(\g)}$}.
\item $(\f\cdot \g)$ is a regular expression denoting the language \\ $\LL(\f\cdot \g) = \LL(\f)\cdot\LL(\g)$.
\item $(\f^*)$ is a regular expression denoting the language $\LL(\f^*) = (\LL(\f))^*$ .
\end{itemize}

\end{itemize}
The following identities are classically used: \\$\emptyset + \E = \E = \E + \emptyset,~ \varepsilon\cdot \E = \E = \E\cdot \varepsilon,
\emptyset\cdot \E = \emptyset = \E\cdot \emptyset$.\\
Let $\E$ be a regular expression. Its  {\it linearized form}, denoted by $\E'$, is obtained by
ranking every letter occurrence with a subindex denoting its position in $\E$. We say that a regular
 expression is in linear form if each letter of the expression occurs only once. 
Subscripted letters are called positions and the set of positions is denoted by $\pos(\E)$. We denote by $h$ the application that maps
each position in $\pos(\E)$ to the symbol of $A$ that appears at this position in $\E$. 
 The {\it size} of $\E$, denoted by $|\E|$, is the 
size of its syntactical tree. We call {\it alphabetic width} of $\E$, denoted by $||\E||$, the number of occurrences of 
letters in the expression.
\begin{df}\label{df2}
 Let $\E$ be a regular expression denoting the language $\LL$. The set $\nul(\E)$
 is defined by:
 $$\nul(\E)=\left\{\begin{array}{ll}
                                           \{\varepsilon\} & \mbox{ if }\varepsilon\in \LL,\\
                                           \emptyset & \mbox{ otherwise. }
                                          \end{array}\right.
                                                            $$ 
 
 \end{df}
A {\it finite automaton} (NFA) is a 5-tuple ${\cal A} = \langle Q, A, \delta, q_0 , F \rangle,$ where $Q$ is
a finite set of states, $A$ is an alphabet, $q_0 \in Q$ is the initial state, $F\subseteq Q$ is
the set of final states and $\delta : Q \times A  \longmapsto 2^Q$ is the transition function.
The language recognized by ${\cal A}$ is denoted by
$\LL({\cal A})$.

\subsection{Multi-tilde-bar expressions}
We now recall the syntactical definition of extended to multi-tilde-bar regular expressions (EmtbREs) \cite{CCMacta}.
Notice that these expressions will be proven to be regular later (see Corollary \ref{regular}).\smallskip

Let $\E$ be a regular expression.
The language $\LL(\E)\setminus \{\varepsilon\}$
is denoted by the expression $\overline{\E}$ (bar operator)
and the language $\LL(\E)\cup \{\varepsilon\}$
is denoted by the expression 
\begin{pspicture}(0,0)(0.4,0.35)
\pszigzag[coilwidth=0.07,coilheight=2.2,coilarm=0](0,0.32)(0.4,0.32)
\uput{0.1cm}[1](0,0.1){$\E$}
\end{pspicture} (tilde operator).
Without loss of generality, any regular expression can be considered as a product of concatenation $\E_1\cdot \E_2 \cdots \E_n$ of $n$ subexpressions , with $n\geq 1$.  
Such a product is denoted by $\E_{1,n}$ and the set of its factors is denoted by $F$.
Let us consider the set of pairs 
$F'=\{(i,j)~|~1\leq i\leq j\leq n\}$.
For $1\leq i\leq j\leq n$, the factor $\E_{i,j}$ is represented by the pair $(i,j) \in F'$.
A bar operator (resp. a tilde operator) applying on the factor $\E_{i,j}$ is also represented by the pair $(i,j) \in F'$.
Given two disjoint subsets $F_1$ end $F_2$ of $F$,
a {\it multi-tilde-bar operator} is defined by two subsets of $F'$: the set $\BB_1^n$ of bar operators applying on the factors of $F_1$ and the set $\TT_1^n$ of tilde operators applying on the factors of $F_2$.
Finally, a {\it multi-tilde-bar expression} $\E'_{1,n}$ is defined as a product $\E_{1,n}$ equipped with a set $\BB_1^n$ of bars and a set $\TT_1^n$ of tildes.
 
 \begin{df}{\cite{CCMacta}}
 An {\it Extended to multi-tilde-bar Regular Expression} (EmtbRE) over an
alphabet $A$ is inductively defined by:\smallskip

$\begin{array}{lll}
\E = \emptyset, &~~~ & \E = (\f+\g), \mbox{ with } \f \mbox{ and } \g \mbox{ two EmtbREs},\\
\E = x, \mbox{ with } x \in A\cup\{\varepsilon\},  &~~~&\E = (\f\cdot\g), \mbox{ with } \f \mbox{ and } \g \mbox{ two EmtbREs},\\
 &~~~& \E = (\f^*),\mbox{ with } \f \mbox{ an EmtbRE},
\end{array}$\smallskip

$
\begin{array}{rl}
\E'_{1,n} \mbox{ is a EmtbRE with }& \BB_1^n \mbox{ define the set of Bar operators, }\\
& \TT_1^n \mbox{ define the set of Tilde operators, } \\
 & \mbox{ and } \E_{1,n} \mbox{ a concatenation product of EmtbREs}.
 \end{array}$
 \end{df}
  The EmtbRE  $\E'_{i,j}$ is deduced from the expression
$\E'_{1,n}$ by taking as set of bars the subset $\BB_i^j=\{(k_1, k_2) \in \BB_1^n~|~i\leq k_1\leq k_2\leq j \}$ of $\BB_1^n$
and as set of tildes the subset  $\TT_i^j=\{(k_1,k_2) \in \TT_1^n~|~i\leq k_1\leq k_2\leq j \}$
of $\TT_1^n$.
 The {\it size} of $\E'_{1,n}$ denoted $|\E'_{1,n}|$ is the 
size of $\E_{1,n}$ added with the term $|\TT_1^n|+|\BB_1^n|$. 
 The {\it alphabetic width} $||\E'_{1,n}||$ of $\E'_{1,n}$ is the number of occurrences of letters in the expression.
\begin{example}
Consider the regular expression $\E_{1,5}=\E_1\cdot \E_2 \cdot \E_3\cdot \E_4 \cdot \E_5$.
 Let us consider the set of bars 
$\BB_1^5=\{(2,3),(3,5)\}$
and the set of tildes $\TT_1^5=\{(1,2),(4,5)\}$.
The EmtbREs $\E'_{1,5}$ and $\E'_{1,3}$ can be represented graphically as follows:\\
\small{\begin{tabular}{lr}
\begin{pspicture}(0,0)(2,0.8)
\psline[linewidth=0.9pt,linestyle=solid]{}(2.6,0.55)(4.4,0.55)
\psline[linewidth=0.9pt,linestyle=solid]{}(2,0.45)(3.3,0.45)

\pszigzag[coilwidth=0.07,coilheight=2,coilarm=0](1.5,0.3)(2.6,0.3)
\pszigzag[coilwidth=0.07,coilheight=2,coilarm=0](3.3,0.3)(4.4,0.3)

\uput{0.5cm}[1](0,0){$\E'_{1,5}=\E_1\cdot \E_2 \cdot \E_3\cdot \E_4 \cdot \E_5$}
\end{pspicture}
& 
\begin{pspicture}(-4,0)(2,0.8)
\psline[linewidth=0.9pt,linestyle=solid]{}(2,0.45)(3.1,0.45)

\pszigzag[coilwidth=0.07,coilheight=2,coilarm=0](1.5,0.3)(2.6,0.3)

\uput{0.5cm}[1](0,0){$\E'_{1,3}=\E_1\cdot \E_2 \cdot \E_3$}
\end{pspicture}
\end{tabular} 
}
 \end{example}                                 
 
\section{The language of a multi-tilde-bar expression}
The original semantical definition of the language of an EmtbRE \cite{CCMacta} is based on the description of how words are generated by overlapping tildes and bars.
Our approach is different: we provide a recursive definition of the language of an EmtbRE.
\begin{df}\label{df1}
Let $\E'_{1,n}$ be a  multi-tilde-bar expression. The language associated with
$\E'_{1,n}$ is recursively defined as follows:\smallskip

 \hspace*{3cm}{\footnotesize$\LL(\E'_{i,j})=\begin{cases}
   
                     {\cal L} \cup \{\varepsilon\} \mbox{~~~if~} (i,j)\in\TT_1^n, \\
                      {\cal L}\setminus \{\varepsilon\} \mbox{~~~if~} (i,j)\in\BB_1^n, \\
                   {\cal L}  \mbox{~~~~~~~~~~~~otherwise.} 
                   \end{cases}
$}\\
With {\footnotesize${\cal L}=\bigcup\limits_{k=i}^{j-1}\LL(\E'_{i,k})\cdot \LL(\E'_{k+1,j})$}~~~and ~~~{\footnotesize
$\LL(\E'_{k,k})=\begin{cases}
                      \LL(\E_{k,k})\cup \{\varepsilon\} \mbox{~~~if~} (k,k)\in\TT_1^n, \\
                      \LL(\E_{k,k})\setminus \{\varepsilon\} \mbox{~~~if~} (k,k)\in\BB_1^n, \\
                       \LL(\E_{k,k}) \mbox{~~~~~~~~~~~~otherwise.~}
                   \end{cases}
$}  \\for all $1\leq i<j\leq n$.
\end{df}
\begin{cor}\label{regular}
The language of a multi-tilde-bar expression $\E'_{1,n}$ is regular. 
\end{cor}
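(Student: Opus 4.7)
The plan is to prove regularity by structural induction on the EmtbRE, following the inductive grammar recalled just before Definition~\ref{df1}. The base cases $\E = \emptyset$ and $\E = x$ for $x \in A \cup \{\varepsilon\}$ are immediate, since $\emptyset$ and $\{x\}$ are regular by definition. The three standard recursive cases $\E = \f + \g$, $\E = \f \cdot \g$ and $\E = \f^*$ follow from the induction hypothesis together with the classical closure of the regular class under union, concatenation and Kleene star.

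The only substantive case is the multi-tilde-bar constructor $\E'_{1,n}$. Here I would run a secondary induction on the integer $d = j-i$ in order to show that $\LL(\E'_{i,j})$ is regular for every $1 \leq i \leq j \leq n$. For $d = 0$, Definition~\ref{df1} reduces $\LL(\E'_{k,k})$ to one of $\LL(\E_{k,k})$, $\LL(\E_{k,k}) \cup \{\varepsilon\}$ or $\LL(\E_{k,k}) \setminus \{\varepsilon\}$, depending on whether $(k,k)$ lies in $\TT_1^n$, $\BB_1^n$, or neither; the primary structural induction applied to the strictly smaller EmtbRE $\E_{k,k}$ gives regularity of $\LL(\E_{k,k})$, and closure of the regular class under union and set difference with the finite language $\{\varepsilon\}$ finishes the base case.

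For $d > 0$, Definition~\ref{df1} writes $\LL(\E'_{i,j})$ as a finite union of concatenations $\LL(\E'_{i,k}) \cdot \LL(\E'_{k+1,j})$ with $i \leq k < j$, possibly corrected by adding or subtracting $\varepsilon$. Each of $\E'_{i,k}$ and $\E'_{k+1,j}$ has range strictly smaller than $d$, so the secondary induction hypothesis yields regularity of the two factors; closure of the regular class under concatenation, finite union and union/difference with $\{\varepsilon\}$ then delivers the conclusion.

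I do not expect any genuine obstacle: the statement is essentially a structural bookkeeping exercise built on top of the classical closure properties of regular languages. The one point requiring a little care is to justify that the two inductions are well-founded together. The cleanest route is to take structural size $|\E|$ as the primary well-founded parameter and observe that every subexpression invoked in Definition~\ref{df1} — namely the $\E'_{i,k}$, the $\E'_{k+1,j}$ and the atomic $\E_{k,k}$ — is of strictly smaller size than $\E'_{1,n}$, so a single induction on $|\E|$ in fact suffices once the secondary argument on $d$ is absorbed.
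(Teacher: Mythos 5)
Your proof is correct and follows the same route the paper intends: the corollary is stated as an immediate consequence of Definition~\ref{df1}, whose recursion builds $\LL(\E'_{i,j})$ from finitely many unions, concatenations, and additions/removals of $\{\varepsilon\}$, all of which preserve regularity. The paper leaves this argument implicit, and your explicit double induction (structural size as the primary parameter, $j-i$ as the secondary one) is exactly the bookkeeping needed to make it rigorous.
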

As we will see in the following, this recursive definition will allow us to provide the construction of the Glushkov automaton of
 any EmtbRE. It is worthwhile noticing that in \cite{CCMacta}, this construction is restricted to saturated EmtbREs,
that is expressions such that in each EmtbRE subexpression every  factor is equipped with either a tilde or a bar.\smallskip

Let us define a particular concatenation operator, denoted by $\odot_\varepsilon$,
as follows:\smallskip

{\footnotesize$
 \LL(\E'_{1,j})\odot_\varepsilon\LL(\E'_{j+k,n})=\left\{\begin{array}{ll}
                                                    \Big(\LL(\E'_{1,j})\cdot\LL(\E'_{j+k,n})\Big)\setminus\{\varepsilon\} & \mbox{ if }(1,n)\in \BB_1^n,\\
                                                     \Big(\LL(\E'_{1,j})\cdot\LL(\E'_{j+k,n})\Big) & \mbox{ otherwise}.
                                                   \end{array}\right.
$}
\begin{prop}\label{pr2}
Let $\E'_{1,n}$ be an EmtbRE. The language associated with $\E'_{1,n}$ can be recursively computed  as follows: \smallskip

{\footnotesize
$\LL(\E'_{1,k})= \Big(\LL(\E'_{1,k-1})\odot_\varepsilon\LL_k\Big)\cup\Big(\bigcup\limits_{j=1}^{k-1}\LL(\E'_{1,j})\odot_\varepsilon\nul(\E'_{j+1,k})\Big)\cup 
\nul(\E'_{1,k}),~{\scriptsize\forall 1<k\leq n},
$} \\
with {\footnotesize$\LL_i=\LL(\E'_{i,i})\cup\nul(\E'_{i,i})$, $\forall 1\leq i\leq n$}.
\end{prop}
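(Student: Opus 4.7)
My plan is to establish the identity by double inclusion. Denote the three summands of the right-hand side by $T_1 = \LL(\E'_{1,k-1}) \odot_\varepsilon \LL_k$, $T_2 = \bigcup_{j=1}^{k-1} \LL(\E'_{1,j}) \odot_\varepsilon \nul(\E'_{j+1,k})$, and $T_3 = \nul(\E'_{1,k})$, and write $\mathcal{L} = \bigcup_{j=1}^{k-1} \LL(\E'_{1,j}) \cdot \LL(\E'_{j+1,k})$ for the inner union appearing in Definition \ref{df1}. Intuitively, $T_1$ collects words whose final piece genuinely comes from $\E'_{k,k}$, $T_2$ collects words whose tail $\E'_{j+1,k}$ contributes only the empty word, and $T_3$ captures $\varepsilon$ itself whenever it belongs to $\LL(\E'_{1,k})$; together, $\odot_\varepsilon$ and $T_3$ are meant to implement the outer tilde/bar adjustment on $(1,k)$ that Definition \ref{df1} performs on $\mathcal{L}$.

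For the easier inclusion $T_1 \cup T_2 \cup T_3 \subseteq \LL(\E'_{1,k})$, I would treat each summand in turn. Since $\nul(\E'_{k,k}) \subseteq \LL(\E'_{k,k})$, one has $\LL_k = \LL(\E'_{k,k})$, so $T_1$ is either $\LL(\E'_{1,k-1}) \cdot \LL(\E'_{k,k})$ or that set with $\varepsilon$ removed; both are contained in $\LL(\E'_{1,k})$ via the $j = k-1$ slice of $\mathcal{L}$ together with the outer $\varepsilon$-adjustment. Each term of $T_2$ is either empty (when $\varepsilon \notin \LL(\E'_{j+1,k})$) or equals $\LL(\E'_{1,j})$ up to $\varepsilon$-removal, which lies in the $j$-th slice of $\mathcal{L}$. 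Finally, $T_3 \subseteq \LL(\E'_{1,k})$ by the very definition of $\nul$.

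For the reverse inclusion I would take $w \in \LL(\E'_{1,k})$ and proceed by induction on $k$. If $w = \varepsilon$, then $\nul(\E'_{1,k}) = \{\varepsilon\}$ and $w \in T_3$. Otherwise $w \in \mathcal{L}$, so $w = uv$ with $u \in \LL(\E'_{1,j_0})$ and $v \in \LL(\E'_{j_0+1,k})$ for some $j_0 \in [1,k-1]$. If $v = \varepsilon$ then $\nul(\E'_{j_0+1,k}) = \{\varepsilon\}$ and $w = u \in \LL(\E'_{1,j_0}) \cdot \nul(\E'_{j_0+1,k})$; since $w \neq \varepsilon$ the $\odot_\varepsilon$ leaves the element untouched, and $w \in T_2$. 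If $v \neq \varepsilon$ I would re-apply Definition \ref{df1} to $\LL(\E'_{j_0+1,k})$, splitting $v = v_1 v_2$ with $v_1 \in \LL(\E'_{j_0+1,m})$ and $v_2 \in \LL(\E'_{m+1,k})$ for some $m > j_0$, then repackaging as $(u v_1, v_2)$ with the larger split index $m$, iterating until either $v_2 = \varepsilon$ (putting $w$ in $T_2$) or $m = k-1$ (putting $w$ in $T_1$).

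The main obstacle in this iteration is certifying that $u v_1 \in \LL(\E'_{1,m})$ at each intermediate split point, since a bar on $(1,m)$ removes $\varepsilon$ from $\LL(\E'_{1,m})$ even though $u v_1$ is assembled from legal pieces. My plan is to resolve this by a subcase analysis: when $u v_1 \neq \varepsilon$ the membership $u v_1 \in \LL(\E'_{1,m})$ holds regardless of the annotation on $(1,m)$ and the iteration proceeds directly; when $u v_1 = \varepsilon$, the word $w = v_2$ lies entirely in the shorter sub-product $\E'_{m+1,k}$, and I would invoke the induction hypothesis on that sub-product and lift the resulting decomposition back to one of $T_1, T_2, T_3$ using the $\varepsilon$-witnesses in $\LL(\E'_{1,j_0})$ and $\LL(\E'_{j_0+1,m})$ together with the definition of $\nul$. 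The closing bookkeeping step is to check that the outer $\varepsilon$-adjustment on $(1,k)$ (tilde, bar, or neither) is implemented exactly by the combination of $\odot_\varepsilon$ and $T_3$.
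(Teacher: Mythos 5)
Your strategy for the forward inclusion correctly isolates the crux, but the patch you propose for it cannot be completed: the subcase you flag (new prefix $uv_1=\varepsilon$ while the new span carries a bar) is not a bookkeeping nuisance but the point where the identity itself breaks down. Take $n=3$, $\E_1=(a+\varepsilon)$, $\E_2=(b+\varepsilon)$, $\E_3=c$, $\BB_1^3=\{(1,2)\}$, $\TT_1^3=\emptyset$. Definition~\ref{df1} gives $\LL(\E'_{1,2})=\{ab,a,b\}$, $\LL(\E'_{2,3})=\{bc,c\}$ and $\LL(\E'_{1,3})=\LL(\E'_{1,1})\cdot\LL(\E'_{2,3})\,\cup\,\LL(\E'_{1,2})\cdot\LL(\E'_{3,3})=\{abc,ac,bc,c\}$; the word $c$ enters through the split at $j=1$, where the bar on $(1,2)$ is invisible. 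On the right-hand side, $\LL_3=\{c\}$ and $\nul(\E'_{2,3})=\nul(\E'_{3,3})=\nul(\E'_{1,3})=\emptyset$, so $T_1=\{ab,a,b\}\cdot\{c\}$ and $T_2=T_3=\emptyset$: the word $c$ is missing. Your iteration reaches exactly the state you worried about ($u=v_1=\varepsilon$, new split index $2$, $(1,2)\in\BB_1^3$), and the proposed fallback --- induction on the sub-product followed by a lift using the $\varepsilon$-witnesses in $\LL(\E'_{1,1})$ and $\LL(\E'_{2,2})$ --- fails because those witnesses cannot be recombined into $\varepsilon\in\LL(\E'_{1,2})$: the bar has destroyed it. So the ``closing bookkeeping step'' you defer is not performable. (Your reverse inclusion $T_1\cup T_2\cup T_3\subseteq\LL(\E'_{1,k})$ is fine, which is consistent with the example: the failure is one-sided.)

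For comparison, the paper's own argument is a purely algebraic induction: expand $\LL(\E'_{j+1,k})$ by the induction hypothesis, distribute, and absorb cross terms using $\LL(\E'_{1,j})\odot_\varepsilon\LL(\E'_{j+1,l})\subseteq\LL(\E'_{1,l})$. Its weak point is the silent re-association inside $\bigcup_j\LL(\E'_{1,j})\odot_\varepsilon\LL(\E'_{j+1,k-1})\odot_\varepsilon\LL_k$: the operator $\odot_\varepsilon$ is not associative (whether $\varepsilon$ is removed depends on the span of the pair being formed), and grouping the first two factors reinstates the bar on $(1,k-1)$ and discards precisely the words your subcase produces --- $c$ in the example above. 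Both proofs stumble at the same place; yours has the merit of making the obstacle explicit. A correct statement needs either an additional hypothesis (for instance the saturation condition of \cite{CCMacta}, or a restriction on bars over internal spans whose factors can all vanish) or extra terms on the right-hand side accounting for decompositions whose prefix reduces to $\varepsilon$ across a barred span.
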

\begin{proof}The proof is by induction on $k$, {\it i.e.} the number of factors in $\E'_{1,k}$. Let us consider the case where 
$k=2$. It is easy to prove that the proposition is true: \smallskip

{\footnotesize$\LL(\E'_{1,2})= \Big(\LL(\E'_{1,1})\odot_\varepsilon\LL_2\Big)\cup\Big(\LL(\E'_{1,1})\odot_\varepsilon\nul(\E'_{2,2})\Big)\cup
\nul(\E'_{1,2})
$}\\

We now suppose that the proposition
is satisfied for the  EmtbE $\E'_{1,k-1}$ and  we prove it is satisfied for
  $\E'_{1,k}$.\smallskip
  
{\footnotesize$\begin{array}{lcl}
\LL(\E'_{1,k})&\stackrel{Def.~\ref{df1}}{=}&\bigcup\limits_{j=1}^{k-1}\LL(\E'_{1,j})\odot_\varepsilon \LL(\E'_{j+1,k})\cup\nul(\E'_{1,k})
                    \\                          
            &=& \Big(\LL(\E'_{1,k-1})\odot_\varepsilon\LL_k\Big)\cup\Big(\bigcup\limits_{j=1}^{k-2}\LL(\E'_{1,j})\odot_\varepsilon\LL(\E'_{j+1,k})\Big)
              \cup\nul(\E'_{1,k})\\
          &\stackrel{Ind. Hyp.}{=}&\Big(\LL(\E'_{1,k-1})\odot_\varepsilon\LL_k\Big)\cup\nul(\E'_{1,k})\\
  &&\bigcup\limits_{j=1}^{k-2}\LL(\E'_{1,j})\odot_\varepsilon
              \Bigg(
             \Big(\LL(\E'_{j+1,k-1})\odot_\varepsilon\LL_{k}\Big)\\
              && \hspace*{2cm}\cup\Big(\bigcup\limits_{l=j+1}^{k-1}\LL(\E'_{j+1,l})\odot_\varepsilon
              \nul(\E'_{l+1,k})\Big)\cup \nul(\E'_{j+1,k})
              \Bigg)
         \end{array}$}\\
                       
    \hspace*{-0.5cm}{\footnotesize$\begin{array}{lcl} 
     &=& \hspace*{-0.4cm}\Big(\LL(\E'_{1,k-1})\odot_\varepsilon\LL_k\Big)\cup\nul(\E'_{1,k})
       \cup\Big(\bigcup\limits_{j=1}^{k-2}\LL(\E'_{1,j})\odot_\varepsilon \LL(\E'_{j+1,k-1})\odot_\varepsilon\LL_{k}\Big) \\
      && \cup\Big(\bigcup\limits_{j=1}^{k-2}\LL(\E'_{1,j})\odot_\varepsilon\big(\bigcup\limits_{l=j+1}^{k-1}\LL(\E'_{j+1,l})\odot_\varepsilon
      \nul(\E'_{l+1,k})\big)\Big) \\
       &&     \cup \Big(\bigcup\limits_{j=1}^{k-2}\LL(\E'_{1,j})\odot_\varepsilon\nul(\E'_{j+1,k})\Big)\\
       &\stackrel{Def.~\ref{df1}}{=}& \Big(\LL(\E'_{1,k-1})\odot_\varepsilon\LL_k\Big)\cup\nul(\E'_{1,k})\cup
       \Big(\LL(\E'_{1,k-1})\odot_\varepsilon\LL_{k}\Big) \\
      && \cup\Big(\bigcup\limits_{j=1}^{k-2}\bigcup\limits_{l=j+1}^{k-1}\big(\LL(\E'_{1,j})\odot_\varepsilon \LL(\E'_{j+1,l})\big)\odot_\varepsilon
      \nul(\E'_{l+1,k})\Big) \\
     &&\cup \Big(\bigcup\limits_{l=1}^{k-2}\LL(\E'_{1,l})\odot_\varepsilon\nul(\E'_{l+1,k})\Big)\\
     
 \end{array}
$}\smallskip

\noindent~A straightforward consequence of the Definition~\ref{df1} is that 
{\footnotesize$\LL(\E'_{1,j})\odot_\varepsilon\LL(\E'_{j+1,l})
\subseteq \LL(\E'_{1,l})$}, for all $1\leq j\leq k-2$. As a consequence we have:\smallskip

\hspace*{-0.5cm}${\footnotesize\begin{array}{lcl}
\LL(\E'_{1,k})&=&  \Big(\LL(\E'_{1,k-1})\odot_\varepsilon\LL_k\Big)\cup\Big(\LL(\E'_{1,k-1})\odot_\varepsilon
      \nul(\E'_{k,k})\Big)\cup \nul(\E'_{1,k})  \\
&&\hspace*{-0.45cm}\cup 
\Big(\bigcup\limits_{l=1}^{k-2}\LL(\E'_{1,l})\odot_\varepsilon\nul(\E'_{l+1,k})\Big) \cup\Big(\big(\LL(\E'_{1,k-2})\odot_\varepsilon
\LL(\E'_{k-1,k-1})\big)
\odot_\varepsilon
      \nul(\E'_{k,k})\Big) 
\end{array}}$ \smallskip

Finally, {\footnotesize $\LL(\E'_{1,k})=  \Big(\LL(\E'_{1,k-1})\odot_\varepsilon\LL_k\Big)\cup
\Big(\bigcup\limits_{l=1}^{k-1}\LL(\E'_{1,l})\odot_\varepsilon\nul(\E'_{l+1,k})\Big) 
\cup \nul(\E'_{1,k})$}

 \cqfd
\end{proof}
\section{The position automata of a multi-tilde-bar  expression}
 
\subsection{Glushkov functions for a regular expression}
Let $\E$ be a regular expression.
In order to construct a non-deterministic finite automaton
recognizing $\LL(\E)$, Glushkov \cite{Glush} and
McNaughton-Yamada \cite{MnYa} have introduced independently the so-called {\it position automaton}.
Given a regular expression $\E$ in linearized form,
the following sets called {\it Glushkov functions} are defined as follows,
where $x\in\pos(\E)$ and $u,v\in \pos(\E)^*$:
\begin{eqnarray*}
\First(\E)&=& \{x\in \pos(\E)~|~ xv\in\LL(\E)\}\\
\Last(\E)&=& \{x\in \pos(\E)~|~ ux\in\LL(\E)\}\\
\Follow(x,\E)&=& \{y\in \pos(\E)~|~ uxyv\in\LL(\E)\}
\end{eqnarray*}
The position automaton of $\E$ is deduced from these position sets as follows.
We first add a specific position $q_0$ to the set $\pos(\E)$ and
we set $\pos_0(\E) = \pos(\E)\cup\{q_0\}$; the set $\Last_0 (\E)$ is equal to $\Last(\E)$ if $\nul(\E) =\emptyset$
and to $\Last(\E) \cup\{q_0\}$ otherwise; the set $\Follow_0 (x,\E)$ is equal to $\Follow(x,\E)$ if
$x \in \pos(\E)$ and to $\First(\E)$ if $x = q_0$.

The position automaton ${\cal P}_{\E}$ of a regular expression $\E$ is defined by the 5-uple \\$ \langle \pos_0(\E), A, \delta, q_0, \Last_0(\E)\rangle$  such that: \\
$~~~~~~~~~~~~~~\delta(x, a) = \{y~|~y \in \Follow_0 ( x,\E)\mbox{ and }h(y) = a\},~\forall x \in \pos_0 (\E),~\forall a\in A.$

The position automaton ${\cal P}_{\E}$ recognizes the language $\LL(\E)$ \cite{Glush,MnYa}.

Glushkov functions can be defined for bar expressions and tilde expressions as follows,
where $x\in \pos(\E)$:
$$\begin{array}{lllll}
 \First(\E)&=&\First(\overline{\E})&=&\First(\begin{pspicture}(0,0)(0.4,0.35)
\pszigzag[coilwidth=0.07,coilheight=2,coilarm=0](0,0.32)(0.4,0.32)
\uput{0.1cm}[1](0,0.1){$\E$}
\end{pspicture}),\\ 
 \Last(\E)&=&\Last(\overline{\E})&=&\Last(\begin{pspicture}(0,0)(0.4,0.35)
\pszigzag[coilwidth=0.07,coilheight=2,coilarm=0](0,0.32)(0.4,0.32)
\uput{0.1cm}[1](0,0.1){$\E$}
\end{pspicture}),\\
 \Follow(x,\E)&=&\Follow(x,\overline{\E})&=&\Follow(x,\begin{pspicture}(0,0)(0.4,0.35)
\pszigzag[coilwidth=0.07,coilheight=2,coilarm=0](0,0.32)(0.4,0.32)
\uput{0.1cm}[1](0,0.1){$\E$}
\end{pspicture}).
 \end{array}$$
As a consequence the computation of Glushkov functions can be extended to the family of EmtbREs.
Such an extension is described in \cite{CCMacta}; it addresses the subfamily of saturated EmtbREs for which
every factor is equipped with either a tilde or a bar.

\subsection{Glushkov functions for a multi-tilde-bar expression}
In this section, we  address the general case:  
we show how to compute the Glushkov functions of an EmtbRE
for which there is no restriction on the distribution of tilde and bar operators over the factors of the expression.
\begin{prop}\label{flf}
Let $\E=\E_1\cdot \E_2 \cdots \E_n$, with $n\geq 1$,
and $\E'=\E'_{1,n}$ be an EmtbRE in linearized form.
 
Let $k$ be an integer such that $1\leq k\leq n$ and $x$ be a position in $\pos(\E_k)$. 
The Glushkov functions associated with $\E'$ are recursively computed according to the following formulas:
\vspace*{-0.5cm}
\begin{eqnarray*}
  \pos(\E'_{1,n}) &=&\bigcup_{k=1}^{n}\pos(\E_k)
  \end{eqnarray*} 
\scalebox{0.8}{
$  \begin{array}{|c||c|c|c|}
\hline\hline
{\bf \E'}&{\bf \First(\E')} & {\bf \Last(\E')} & {\bf \Follow(x,\E')} \\
\hline\hline 
\emptyset,\varepsilon & \emptyset &\emptyset &\emptyset \\
\hline
a\in A & a &a&\emptyset \\
\hline
\f+\g & \First(\f)\cup\First(\g)  &  \Last(\f)\cup\Last(\g) & \Follow(x,\f)\cup\Follow(x,\g)  \\
\hline
 & & & \Follow(x,\f)\cup\First(\g), \mbox{\bf  if }x\in\Last(\f),\\
\f\cdot \g  & \First(\f)\cup \nul(\f)\First(\g)  & \Last(\g)\cup\nul(\g)\Last(\f)   &  \Follow(x,\f),\mbox{ \bf if }x\in\pos(\f)\setminus \Last(\f),\\
  &   &    & \Follow(x,\g),\mbox{\bf if }x\in\pos(\g). \\
\hline
\f^* &\First(\f) &\Last(\f) & \Follow(x,\f),\mbox{ \bf if }x\in\Last(\f),\\
 &  &  & \Follow(x,\f)\cup \First(\f),\mbox{ \bf otherwise. }\\
 \hline
\end{array}$}
\begin{eqnarray}
 \First(\E'_{1,n}) &=& \First(\E_{1})\uplus \biguplus_{j=1}^{n-1}\nul(\E'_{1,j})\cdot\First(\E'_{j+1,j+1}),\label{first}  \\
  \Last(\E'_{1,n}) &=&\Last(\E_{n})\uplus \biguplus_{j=1}^{n-1}\nul(\E'_{j+1,n})\cdot\Last(\E'_{j,j}), \label{last} \\ 
   \Follow(x,\E'_{1,n})&=& \left\{\begin{array}{ll} \Follow(x,\E_k) & \mbox{if}~{\footnotesize(k=n)\vee (x\notin \Last(\E_k)),\label{follow}} \\
                            \Follow(x,\E_k)\uplus \First(\E'_{k+1,n}) & \mbox{otherwise.}
                              \end{array}\right.  
 \end{eqnarray}
\end{prop}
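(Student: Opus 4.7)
The plan is to split the statement into two parts. The six table entries for $\emptyset$, $\varepsilon$, $a$, $\f+\g$, $\f\cdot\g$ and $\f^*$ are the classical Glushkov--McNaughton--Yamada equalities for ordinary regular expressions; since, as recalled just before the proposition, a bar or a tilde does not modify $\First$, $\Last$ or $\Follow$, these entries lift verbatim to EmtbREs and need only a brief justification. The real content lies in the three identities for the multi-tilde-bar product $\E'_{1,n}$.

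For identities (1) and (2) I proceed by induction on $n$, using Proposition~\ref{pr2} as the main ingredient. The base case $n=1$ is immediate because a single factor with an optional tilde or bar has the same First and Last sets as $\E_1$. In the inductive step for First, I apply $\First$ to both sides of the recurrence of Proposition~\ref{pr2}. Three easy facts do most of the work: $\First$ distributes over union; $\First(L\odot_\varepsilon L')=\First(L\cdot L')$, because $\odot_\varepsilon$ only toggles the presence of $\varepsilon$, which carries no positions; and the standard identity $\First(L\cdot L')=\First(L)\cup \nul(L)\cdot\First(L')$. Substituting the induction hypothesis for $\First(\E'_{1,n-1})$ produces the expected terms $\First(\E_1)\uplus \biguplus_{j=1}^{n-2}\nul(\E'_{1,j})\cdot\First(\E'_{j+1,j+1})$ together with the fresh contribution $\nul(\E'_{1,n-1})\cdot\First(\E'_{n,n})$, while the residual terms coming from $\LL(\E'_{1,j})\odot_\varepsilon\nul(\E'_{j+1,n})$ are, by the induction hypothesis applied to each $\E'_{1,j}$, already included in what has been collected. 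The disjointness of the $\uplus$ follows from linearization (positions are partitioned by factor). Identity (2) is then proved symmetrically, after first establishing the left-oriented analogue of Proposition~\ref{pr2} by the same induction but splitting off the leftmost factor.

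For identity (3), I fix $x\in\pos(\E_k)$ and examine a word $uxyv\in\LL(\E'_{1,n})$. By Definition~\ref{df1}, such a word decomposes as $w_1\cdots w_n$ with $w_i\in\LL(\E'_{i,i})$. If $y\in\pos(\E_k)$, then $y$ contributes to $\Follow(x,\E_k)$. If $y\in\pos(\E_l)$ with $l>k$, then $x$ is necessarily the rightmost letter of $w_k$ (hence $x\in\Last(\E_k)$), each intermediate factor $w_{k+1},\ldots,w_{l-1}$ must equal $\varepsilon$, which by Definition~\ref{df1} is equivalent to $\nul(\E'_{k+1,l-1})=\{\varepsilon\}$, and $y\in\First(\E_l)$. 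Collecting these contributions over $l$ and comparing with identity (1) applied to $\E'_{k+1,n}$ identifies the added set with $\First(\E'_{k+1,n})$, which yields the second case of (3); the case $k=n$ or $x\notin\Last(\E_k)$ reduces immediately to $\Follow(x,\E_k)$.

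The main obstacle I foresee is the bookkeeping in the First induction: the recurrence of Proposition~\ref{pr2} unfolds into many summands, and verifying that the terms arising from $\LL(\E'_{1,j})\odot_\varepsilon\nul(\E'_{j+1,n})$ are absorbed by the principal inductive term relies on the small but essential observation that $\nul(\E'_{1,k})$ does not depend on the surrounding expression, because the tildes and bars of any restricted sub-expression coincide with the restrictions of those of $\E'_{1,n}$. A mirror subtlety arises in the Follow argument, where one must check that forcing intermediate factors to be empty is consistent with every tilde or bar whose interval spans positions between $k$ and $l$; once this invariance is recorded, the remaining manipulations are purely combinatorial.
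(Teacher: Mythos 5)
Your proposal is correct and follows essentially the same route as the paper: the classical table entries are dispatched by the invariance of the Glushkov functions under tilde and bar, and identities \eqref{first}--\eqref{follow} are obtained by induction on $n$ from Proposition~\ref{pr2}, with the residual $\odot_\varepsilon\,\nul$ terms absorbed into the main inductive term exactly as in the paper's computation. The only divergences are minor: the paper extracts $\Last$ directly from the right-peeling recurrence of Proposition~\ref{pr2} (via $\Last(L\cdot L')=\Last(L')\cup\nul(L')\Last(L)$) instead of first proving a left-peeling analogue, and it dismisses \eqref{follow} with ``similar'', so your explicit decomposition argument for $\Follow$ is in fact more detailed than the published proof.
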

\begin{proof}~Proof is restricted to the non-classical cases:\\
\eqref{first} from the definition of the function $\First$, one has:\\
$\First(\E'_{1,n})= \{x\in \pos(\E'_{1,n})~|~ xv\in\LL(\E'_{1,n})\}$. Using the Proposition~\ref{pr2} and by induction on $n$, one can deduce the following equalities:\\
$\begin{array}{rcl}
 \First(\E'_{1,n})&=&\scriptsize{\bigg\{x\in \pos(\E'_{1,n})~|~ xv\in\Big(\LL(\E'_{1,n-1})\cdot\LL_n\Big)\bigg\}}\\
  && \cup \scriptsize{\bigg\{x\in \pos(\E'_{1,n})~|~ xv\in\Big(\bigcup\limits_{j=1}^{n-1}\LL(\E'_{1,j})\cdot\nul(\E'_{j+1,n})\Big)\bigg\}}
\\ 
   &=&\First(\E'_{1,n-1})\cup\nul(\E'_{1,n-1})\cdot\First(\E'_{n,n})\cup\bigcup\limits_{j=1}^{n-1}\First(\E'_{1,j})\\
    &=&\nul(\E'_{1,n-1})\cdot\First(\E'_{n,n})\cup\bigcup\limits_{j=2}^{n-1}\First(\E'_{1,j}) \cup\First(\E'_{1,1})\\
    &\stackrel{Ind. Hyp.}{=}& \First(\E'_{1,1})\uplus\biguplus\limits_{j=1}^{n-1}\nul(\E'_{1,j})\cdot\First(\E'_{j+1,j+1})
\end{array}
$
\\
\eqref{last} from the definition of the function $\Last$, one has:\\
$\Last(\E'_{1,n})= \{x\in \pos(\E'_{1,n})~|~ xv\in\LL(\E'_{1,n})\}$. Using the Proposition~\ref{pr2} and by induction on $n$
one can deduce the following equalities:\\
$\begin{array}{rcl}
 \Last(\E'_{1,n})&=&\scriptsize{\bigg\{x\in \pos(\E'_{1,n})~|~ vx\in\Big(\LL(\E'_{1,n-1})\cdot\LL_n\Big)\bigg\}}\\
  &&\cup\scriptsize{\bigg\{x\in \pos(\E'_{1,n})~|~ vx\in\Big(\bigcup\limits_{j=1}^{n-1}\LL(\E'_{1,j})\cdot\nul(\E'_{j+1,n})\Big)\bigg\}}\\
    &=&\Last(\E'_{n,n})\cup\Last(\E'_{1,n-1})\cdot\nul(\E'_{n,n})\cup\bigcup\limits_{j=1}^{n-1}\Last(\E'_{1,j})\cdot\nul(\E'_{j+1,n})\\
  &=&\Last(\E'_{n,n})\cup\bigcup\limits_{j=1}^{n-1}\Last(\E'_{1,j})\cdot\nul(\E'_{j+1,n})
  \end{array}$\\
  $\begin{array}{rcl}
 \Last(\E'_{1,n})  &\stackrel{Ind. Hyp.}{=}&\Last(\E'_{n,n})\uplus \biguplus\limits_{j=1}^{n-1}\Last(\E'_{j,j})\cdot\nul(\E'_{j+1,n})
\end{array}
$
\\
(\ref{follow}) proof is similar as for \eqref{first} and \eqref{last}.
 \cqfd
\end{proof}
\begin{cor}\label{mtbNULL} 
The Glushkov functions of a multi-tilde-bar expression can be written as a disjoint union which involves the $\First$, $\Last$, and $\Follow$ sets associated
 with sub-expressions of $\E'_{1,n}$ (not of $\E_{1,n}$) and the value of the function $\nul(\E'_{i,j})$ for all $1\leq i\leq j\leq n$.
\end{cor}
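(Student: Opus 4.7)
The plan is to derive the corollary as an essentially immediate reformulation of Proposition~\ref{flf}, once the factors $\E_1$, $\E_n$, $\E_k$ that appear on the right-hand sides are re-identified as the single-factor sub-expressions $\E'_{1,1}$, $\E'_{n,n}$, $\E'_{k,k}$ of $\E'_{1,n}$. The disjoint-union structure is then directly inherited from the $\uplus$ symbols already present in the proposition.

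First I would invoke the invariance property recalled in the preliminaries: for any expression $\E$, the Glushkov functions satisfy $\First(\E) = \First(\b{\E})$ and the same identity with the tilde operator in place of the bar, together with the analogous identities for $\Last$ and $\Follow$. Since a single-factor sub-expression $\E'_{k,k}$ differs from $\E_k$ at most by the application of one tilde or one bar, this yields $\First(\E_k) = \First(\E'_{k,k})$, $\Last(\E_k) = \Last(\E'_{k,k})$ and $\Follow(x, \E_k) = \Follow(x, \E'_{k,k})$ for every $k$.

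Second I would substitute these equalities into formulas \eqref{first}, \eqref{last} and \eqref{follow} of Proposition~\ref{flf}. Every term on the right-hand sides is then either a $\First$, $\Last$ or $\Follow$ set of a sub-expression $\E'_{i,j}$ of $\E'_{1,n}$, or a value $\nul(\E'_{i,j})$ for such a sub-expression; in particular the term $\First(\E'_{k+1,n})$ that still appears in the $\Follow$ formula is already of that form. Because every union in the rewritten formulas is marked as disjoint by Proposition~\ref{flf}, no additional argument is needed to establish disjointness.

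The main subtlety, and the only one, lies in the invariance step above: without it the right-hand sides of Proposition~\ref{flf} would refer to sub-expressions of $\E_{1,n}$ rather than of $\E'_{1,n}$, which is precisely what the corollary forbids. The interest of the statement is less in its proof than in the consequence it draws: since $\First$, $\Last$ and $\Follow$ of $\E'_{1,n}$ reduce to Glushkov sets of strictly smaller sub-expressions together with $\nul$ values, the asymptotic cost of constructing the position automaton of an EmtbRE is governed by the cost of evaluating $\nul$ on the sub-expressions $\E'_{i,j}$, as announced in the introduction.
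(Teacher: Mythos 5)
Your proposal is correct and follows essentially the same route the paper intends: the corollary is stated as an immediate consequence of Proposition~\ref{flf}, whose formulas already express everything in terms of $\nul(\E'_{i,j})$ and of $\First$, $\Last$, $\Follow$ of sub-expressions $\E'_{i,j}$, once the residual terms $\First(\E_1)$, $\Last(\E_n)$, $\Follow(x,\E_k)$ are identified with $\First(\E'_{1,1})$, $\Last(\E'_{n,n})$, $\Follow(x,\E'_{k,k})$ via the tilde/bar invariance identities recalled in Section~4.1 (an identification the paper itself makes silently in the last line of its proof of formula~\eqref{first}). Your isolation of that invariance step as the only non-trivial point, and your observation that disjointness is inherited from the $\uplus$ in the proposition, accurately reconstruct the argument the paper leaves implicit.
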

 The following proposition can be deduced from the Definition~\ref{df2}.
\begin{prop}\label{propnul}Let $\E'$ be an EmtbRE in linearized form. The function $\nul(\E')$ can be recursively computed as follows: \\

$~~~~~~~\begin{array}{lll}
\nul(\emptyset)& = & \emptyset ,\\
\nul(\varepsilon) &=& \{\varepsilon\}, \\
\nul(a) &=& \emptyset,
\end{array}~~~~~~~~~~~~~~
\begin{array}{rll}
\nul(\f+\g) & = & \nul(\f)\cup\nul(\g) , \\
\nul(\f\cdot \g) & = & \nul(\f)\cdot\nul(\g), \\
 \nul(\f^*) & = & \{\varepsilon\} ,
 \end{array}$
\begin{eqnarray}\label{nul}
\nul(\E'_{1,n})&=& \begin{cases}
\emptyset & \mbox{~if ~}(1,n) \in \BB_1^n,\\
\{\varepsilon\} & \mbox{~if ~}(1,n) \in \TT_1^n,\\
\bigcup\limits_{j=1}^{n-1}\nul(\E'_{1,j})\cdot\nul(\E'_{j+1,n}) & \mbox{~otherwise.}
\end{cases}
\end{eqnarray}
\end{prop}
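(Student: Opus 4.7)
The plan is to derive each identity directly from Definition~\ref{df2}, which characterizes $\nul(\E')$ as $\{\varepsilon\}$ when $\varepsilon \in \LL(\E')$ and $\emptyset$ otherwise. The argument reduces in every case to checking whether $\varepsilon$ lies in the language associated with the expression, using Definition~\ref{df1} for the multi-tilde-bar construct and the standard inductive description of $\LL$ for the classical operators.

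For the six ``classical'' identities (the atomic expressions $\emptyset$, $\varepsilon$, $a$, and the constructors $+$, $\cdot$, ${}^*$), I would simply apply Definition~\ref{df2} to the well-known descriptions of $\LL(\f+\g)$, $\LL(\f\cdot\g)$ and $\LL(\f^*)$; the reasoning is independent of the multi-tilde-bar machinery and boils down to the observation that $\varepsilon$ belongs to a union iff it belongs to one of the operands, to a concatenation iff it belongs to both, and trivially to any Kleene star. These are routine.

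The substantive part is the formula for $\nul(\E'_{1,n})$. Here I would perform a three-way case analysis matching the three branches of Definition~\ref{df1}, applied at the top level $(i,j)=(1,n)$. If $(1,n)\in \BB_1^n$, then Definition~\ref{df1} forces $\LL(\E'_{1,n}) = \mathcal{L}\setminus\{\varepsilon\}$, so $\varepsilon \notin \LL(\E'_{1,n})$ and $\nul(\E'_{1,n})=\emptyset$. If $(1,n)\in \TT_1^n$, then $\LL(\E'_{1,n}) = \mathcal{L}\cup\{\varepsilon\}$ contains $\varepsilon$ unconditionally, giving $\nul(\E'_{1,n})=\{\varepsilon\}$. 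In the remaining case, $\LL(\E'_{1,n}) = \bigcup_{j=1}^{n-1}\LL(\E'_{1,j})\cdot\LL(\E'_{j+1,n})$, so $\varepsilon \in \LL(\E'_{1,n})$ iff there exists $j$ with $\varepsilon \in \LL(\E'_{1,j})$ and $\varepsilon \in \LL(\E'_{j+1,n})$, i.e.\ with $\nul(\E'_{1,j})=\nul(\E'_{j+1,n})=\{\varepsilon\}$. Since $\nul$ only ever takes the values $\emptyset$ or $\{\varepsilon\}$, the product $\nul(\E'_{1,j})\cdot\nul(\E'_{j+1,n})$ equals $\{\varepsilon\}$ in exactly that situation and $\emptyset$ otherwise, which yields formula~\eqref{nul}.

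There is no genuine obstacle here: the statement is essentially a translation of Definition~\ref{df1} under the $\varepsilon$-membership test provided by Definition~\ref{df2}. The only point requiring a bit of care is the convention for concatenating a possibly empty singleton $\{\varepsilon\}$ with $\emptyset$ in the union on the right-hand side of~\eqref{nul}; this is handled by the standard identities $\emptyset\cdot X = X\cdot\emptyset = \emptyset$ and $\{\varepsilon\}\cdot\{\varepsilon\}=\{\varepsilon\}$ already recalled in the preliminaries.
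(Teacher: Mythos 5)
Your proposal is correct and follows essentially the same route as the paper: both reduce the classical cases to routine checks and handle formula~\eqref{nul} by a case analysis on whether $(1,n)$ carries a bar, a tilde, or neither, with the third case resolved by the chain of equivalences $\varepsilon\in\LL(\E'_{1,n})\Leftrightarrow\varepsilon\in\bigcup_{j}\LL(\E'_{1,j})\cdot\LL(\E'_{j+1,n})\Leftrightarrow\varepsilon\in\bigcup_{j}\nul(\E'_{1,j})\cdot\nul(\E'_{j+1,n})$. The only cosmetic difference is that you invoke Definition~\ref{df1} directly for the bar and tilde branches where the paper rewrites $\E'_{1,n}$ as a bar or tilde expression (and in fact swaps the two notations in a typo), so your phrasing is, if anything, cleaner.
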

\begin{proof}
Proof is by induction on the size of $\E$. It is restricted to the non-classical case (\ref{nul}).\\
If $(1,n)\in\TT_1^n$, then $\E'_{1,n}$ can be written as $\overline{\f}$. Thus, by the definition of the set $\nul$, we have 
$\nul(\E'_{1,n})=\{\varepsilon\}$.  If $(1,n)\in\BB_1^n$, then $\E'_{1,n}$ can be written as \begin{pspicture}(0,0)(0.4,0.35)
\pszigzag[coilwidth=0.07,coilheight=2.2,coilarm=0](0,0.32)(0.4,0.32)
\uput{0.1cm}[1](0,0.1){$\f$}
\end{pspicture}.  Thus,  by the definition of the set $\nul$, we have  $\nul(\E'_{1,n})=\emptyset$.\\
Let us suppose  that $(1,n)\notin\TT_1^n\cup\BB_1^n $,
one has:\\
$\begin{array}{lcl}
 \varepsilon\in\LL(\E'_{1,n})&\stackrel{Def.~\ref{df1}}{\Leftrightarrow}&\varepsilon\in\bigcup\limits_{j=1}^{n-1}\LL(\E'_{1,j})\cdot \LL(\E'_{j+1,n}) \\

  &\Leftrightarrow & \varepsilon\in\bigcup\limits_{j=1}^{n-1}\big(\nul(\E'_{1,j})\cdot\nul(\E'_{j+1,n})\big)\\  
 &\Leftrightarrow & \varepsilon\in\nul(\E'_{1,n})
\end{array}
$\\

\cqfd
\end{proof}
\begin{example}\label{example1}
Let us consider the following EmtbRE: \smallskip

\hspace*{1cm}{\small\begin{pspicture}(0,0)(2,0.9)
\psline[linewidth=0.9pt,linestyle=solid]{}(2.6,0.3)(5.5,0.3)
\psline[linewidth=0.9pt,linestyle=solid]{}(4,0.43)(6.9,0.43)

\pszigzag[coilwidth=0.065,coilheight=2.5,coilarm=0](6.9,0.6)(8,0.6)
\pszigzag[coilwidth=0.065,coilheight=2.5,coilarm=0](1.5,0.6)(4,0.6)
\pszigzag[coilwidth=0.065,coilheight=2.5,coilarm=0](2.5,0.73)(7.6,0.73)
\uput{0.5cm}[1](0,0){$\E'_{1,7}=a_1^*\cdot b_2\cdot (c_3+\varepsilon) \cdot (d_4+\varepsilon)\cdot (e_5+\varepsilon)\cdot f_6\cdot g_7^*$}
\end{pspicture}}

\bigskip
                                                           
 The language associated with $\E'_{1,7}$ is: \smallskip
 
 ${\footnotesize\{ a_1b_2, a_1b_2g_7, b_2,b_2c_3, b_2g_7,\cdots, b_2e_5,b_2c_3d_4e_5,b_2c_3d_4e_5f_6,b_2c_3f_6g_7,}$\\
${\footnotesize~~~~~~~~~b_2d_4e_5f_6g_7,\cdots,d_4,d_4e_5,d_4e_5f_6,d_4e_5f_6g_7,\cdots,e_5,e_5f_6,e_5f_6g_7,\cdots\}}$\smallskip

The associated Glushkov functions are: \smallskip

\begin{minipage}{6cm}
\footnotesize{ $\begin{array}{rcl}
  \pos(\E')&=&\{a_1,b_2,c_3,d_4,e_5,f_6,g_7\} \\
 \nul(\E')&=& \emptyset\\
 \First(\E')&=&\{a_1,b_2,d_4,e_5\}\\
 \Last(\E')&=&\{b_2,c_3,d_4,e_5,f_6,g_7\}\\
 \Follow(a_1,\E')&=&\{a_1,b_2\}\\
  \Follow(b_2,\E')&=& \{c_3,d_4,g_7\}\\
\Follow(c_3,\E')&=& \{d_4,e_5,f_6\}\\
 \Follow(d_4,\E')&=& \{e_5,f_6\}\\
  \Follow(e_5,\E')&=& \{f_6\}\\
 \Follow(f_6,\E')&=& \{g_7\}\\
  \Follow(g_7,\E')&=& \{g_7\}
 \end{array}
$ }\end{minipage}
\begin{minipage}{6cm}
\begin{figure}[H]
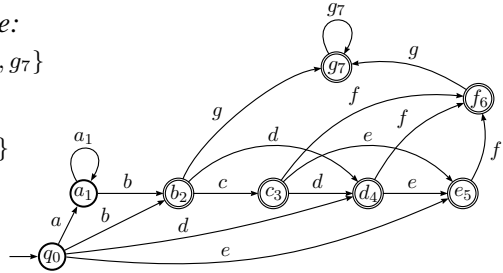

\scalebox{0.7}{\VCDraw{%
\begin{VCPicture}{(-1,-2)(20,4)}

\State[q_0]{(0,-1)}{Q} \State[a_1]{(1,1)}{A} \FinalState[b_2]{(4,1)}{B}\FinalState[c_3]{(7,1)}{C}\FinalState[d_4]{(10,1)}{D}
\FinalState[e_5]{(13,1)}{E}\FinalState[f_6]{(13.5,4)}{F} \FinalState[g_7]{(9,5)}{G}

\Initial{Q}
\EdgeL{Q}{A}{a} \EdgeL{A}{B}{b} \EdgeL{B}{C}{c} \EdgeL{C}{D}{d} \EdgeL{D}{E}{e} 

\EdgeL{Q}{B}{b} 
\VArcL{arcangle=-5}{Q}{D}{d} 
\VArcL{arcangle=-15}{Q}{E}{e}

\VArcL{arcangle=30,ncurv=.5}{B}{G}{g}
\VArcL[.5]{arcangle=45,ncurv=.8}{B}{D}{d}
\VArcL[.5]{arcangle=45,ncurv=.8}{C}{E}{e}
\VArcL[.5]{arcangle=20,ncurv=.8}{D}{F}{f}
\VArcL[.5]{arcangle=30,ncurv=.8}{C}{F}{f}
\VArcR[.5]{arcangle=-20,ncurv=.8}{F}{G}{g}
\VArcR[.5]{arcangle=-20,ncurv=.8}{E}{F}{f}
\LoopN[.5]{A}{a_1}
\LoopN[.5]{G}{g_7}

\end{VCPicture}
}}
\caption{The Position automaton ${\cal A}_{\E'_{1,7}}$}
\end{figure}
\end{minipage}             
                               
\end{example}
\section{Efficient computations of the position automaton and of the c-continuation automaton}
In this section, we present efficient algorithms to compute the Glushkov functions of a multi-tilde-bar expression $\E'$, based on the formulas of the Proposition~\ref{flf}.
According to the Corollary~\ref{mtbNULL}, the worst case time complexity of these algorithms depends on the worst case time complexity of the function $\nul(\E')$
that we first study.
 \subsection{Computation of $\nul(\E')$}
According to the Proposition~\ref{propnul}, a naive computation of the function $\nul$ of the EmtbRE $\E'_{1,n}$ can be performed
 using the following Algorithm.
 
 \scalebox{0.8}{
\RestyleAlgo{boxed}
  \begin{algorithm}[H]
  \DontPrintSemicolon
\SetAlgoVlined
\SetAlgoLined
 \KwData{$\E'_{i,j}$}
 \KwResult{$\nul(\E'_{i,j})$ }
 \BlankLine
 \For{$i\leftarrow 1$ \KwTo $n$}{
     \eIf{$(i,i)\in \BB_1^n$}{
    $\nul(\E'_{i,i})=\emptyset$\;
    }{  \eIf{$(i,i)\in \TT_1^n$}{$\nul(\E'_{i,i})=\{\varepsilon\}$\;
    }
    {$\nul(\E'_{i,i})=\nul(\E_{i,i})$\;
         }
     
     }
 }
\For{$k\leftarrow 1$ \KwTo $n-1$}{
     \For{$i\leftarrow 1$ \KwTo $n-k$}{
          \eIf{$(i,i+k)\in \BB_1^n$}{
    $\nul(\E'_{i,i+k})=\emptyset$\;
    }{  \eIf{$(i,i+k)\in \TT_1^n$}{$\nul(\E'_{i,i+k})=\{\varepsilon\}$\;
    }
    {$\nul(\E'_{i,i+k})=\bigcup\limits_{j=1}^{i+k-1}\nul(\E'_{i,j})\cdot\nul(\E'_{j+1,i+k})$\;
         }
     
     }
 }
 } 
 \end{algorithm}
}\smallskip

The different steps of the algorithm are illustrated through the following example.
 \begin{example}
 Consider the EmtbRE $\E'_{1,3}$ such that {\footnotesize$\TT_1^3=\{(1,1),(2,3)\}$, $\BB_1^3=\{(1,2),(3,3)\}$}, and $ \E_1=a$, $\E_2=(b+\varepsilon)$,
 $\E_3=(c+\varepsilon)$. The diagram below is a graphical representation of  the recursive dependency between different values of $\nul(\E'_{i,j})$.

 \scalebox{0.85}{

\hspace*{2cm}\scalebox{0.7}{\VCDraw{%
\begin{VCPicture}{(3.5,-1.5)(7,7)}
\LargeState

 \StateVar[\E'_{1,4}]{(6,6)}{A}
\StateVar[\E'_{1,3}]{(4,4)}{B}\StateVar[\E'_{2,4}]{(8,4)}{K}
\StateVar[\E'_{3,4}]{(10,2)}{E}
\StateVar[\E'_{2,2}]{(4,0)}{G}
\StateVar[\E'_{4,4}]{(12,0)}{I}

 \SetStateLineColor{blue}
\SetStateLabelColor{blue}
\StateVar[\E'_{1,2}]{(2,2)}{C}
\StateVar[\E'_{3,3}]{(8,0)}{H}

\SetStateLineColor{green}
\SetStateLabelColor{green}
\StateVar[\E'_{1,1}]{(0,0)}{F}
\StateVar[\E'_{2,3}]{(6,2)}{D}


\EdgeR{B}{A}{} 
\EdgeL{K}{A}{} 
\EdgeL{C}{B}{} 
\EdgeL{D}{B}{} 
\EdgeL{D}{K}{} 
\EdgeL{E}{K}{} 
\EdgeL{F}{C}{} 
\EdgeL{G}{C}{}
\EdgeL{G}{D}{} 
\EdgeL{H}{D}{}
\EdgeL{H}{E}{} 
\EdgeL{I}{E}{}
\end{VCPicture}
}
}
}
\hspace*{3cm}\begin{tabular}{|l|c|l|} 
\hline
EmtbRE & $\nul $ & \\
\hline
$\E'_{1,1}$ & $\{\varepsilon\}$ & $(1,1) \in \TT_1^3$ \\
$\E'_{2,3}$ & $\{\varepsilon\}$ & $(2,3) \in \TT_1^3$ \\
$\E'_{1,2}$ & $\emptyset$ & $(1,2) \in \BB_1^3$ \\
$\E'_{3,3}$ & $\emptyset$ & $(3,3) \in \BB_1^3$ \\
\hline
\end{tabular}

It holds: \\ $\begin{array}{llcccc}
\nul(\E'_{1,3}) &= &\big(\nul(\E'_{1,1})\cdot \nul(\E'_{2,3})\big)& \cup &\big(\nul(\E'_{1,2})\cdot
\nul(\E'_{3,3})\big)\\
 &=& \{\varepsilon\}   & \cup &\emptyset \\
&=  &\{\varepsilon\}&
 \end{array}$
 \end{example}
Let us consider the case of an EmtbRE $\E'_{1,n}$.
There are $(n-k)$ vertices on the $k^{th}$ line, corresponding to tilde or bar operators $(1,1+k),(2,2+k),\dots$
 The computation of the associated $\nul(\E'_{i,i+k})$ functions requires:
 \begin{itemize}
  \item a constant number of elementary test operations: \\  if {$(i,i+k)\in \TT_1^n$ or  $(i,i+k)\in \BB_1^n$},
 \item $(k-1)$ concatenations of {$\big(\nul(\E'_{i,j})\cdot\nul(\E'_{j+1,i+k})\big)$},
\item $(k-2)$ unions.
 \end{itemize}
Finally,  $\sum\limits_{k=2}^{n}2*k*(n-k+1)$ operations are needed to
   compute the function $\nul(\E'_{1,n})$. 
\begin{prop}
 Let $\E'_{1,n}$ be an EmtbRE. The function $\nul(\E'_{1,n})$ can be computed in $O(|\E'_{1,n}|+n^3)$ time.
\end{prop}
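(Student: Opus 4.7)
The plan is to analyze the dynamic-programming algorithm stated just above the proposition, which mechanically applies the recursive formulas of Proposition~\ref{propnul}. The analysis splits naturally into three phases: a preprocessing phase, the base diagonal $(i,i)$, and the upper diagonals $(i,i+k)$ for $k \ge 1$.

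First, I would preprocess the sets $\TT_1^n$ and $\BB_1^n$ into boolean lookup tables indexed by pairs $(i,j)$, so that every membership test $(i,j) \in \TT_1^n$ or $(i,j) \in \BB_1^n$ costs $O(1)$; this preprocessing runs in $O(|\TT_1^n| + |\BB_1^n| + n^2) = O(|\E'_{1,n}| + n^2)$ time. In the same pass I would compute $\nul(\E_i)$ for each factor $\E_i$ via the classical rules of Proposition~\ref{propnul} (cases $\emptyset$, $\varepsilon$, $a$, $\f+\g$, $\f\cdot\g$, $\f^*$) in time proportional to $|\E_i|$, for a cumulative cost of $O(\sum_{i=1}^{n} |\E_i|) = O(|\E'_{1,n}|)$. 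From these, the $n$ values $\nul(\E'_{i,i})$ are obtained in $O(1)$ each by consulting the two lookup tables, so the base diagonal is filled in $O(n)$ time.

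The core phase fills the upper diagonals in the order dictated by the algorithm: for $k = 1, \ldots, n-1$ and $i = 1, \ldots, n-k$, we set $\nul(\E'_{i,i+k})$ by first testing membership in $\BB_1^n$ or $\TT_1^n$ (which gives $\emptyset$ or $\{\varepsilon\}$ in $O(1)$), and otherwise by evaluating $\bigcup_{j=i}^{i+k-1} \nul(\E'_{i,j}) \cdot \nul(\E'_{j+1,i+k})$ against the already filled lower diagonals. The crucial observation is that every value produced by the recursion is a subset of the singleton $\{\varepsilon\}$, so each union and each concatenation is a constant-time elementary operation. Hence filling the entry $(i,i+k)$ costs $O(k)$, and the cumulative cost of this phase is $\sum_{k=1}^{n-1} (n-k) \cdot O(k) = O(n^3)$.

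Summing the three phases yields the announced bound $O(|\E'_{1,n}| + n^3)$. I do not anticipate real obstacles: the only subtle point is the constant-time cost of the union and concatenation operations on $\nul$-values, which rests entirely on the observation that such values are either $\emptyset$ or $\{\varepsilon\}$. The rest is routine counting, matching exactly the triangular schedule $\sum_{k=2}^{n} 2k(n-k+1)$ that the paper highlights just above the proposition.
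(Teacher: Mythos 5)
Your proposal is correct and follows essentially the same route as the paper: both analyses count the work of the diagonal-by-diagonal dynamic program, charging $O(k)$ unions and concatenations of $\{\varepsilon\}$-valued sets to each entry $(i,i+k)$ and summing to $O(n^3)$, with an additional $O(|\E'_{1,n}|)$ term for evaluating $\nul$ on the individual factors. Your explicit remarks on the $O(1)$ membership tests via lookup tables and on why each union/concatenation is constant time are merely a more careful spelling-out of what the paper leaves implicit.
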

Notice that the function $\nul(\E'_{1,n})$ can be computed
by making use of one of the numerous algorithms which compute the transitive closure of a DAG (see for example \cite{Chen}).
Although these algorithms have the same $O(n^3)$ worst case time complexity as the naive algorithm
they likely have a better running time performance than the naive algorithm.
  \subsection{Computation of the Glushkov functions}
 According to Corollary~\ref{mtbNULL}, for an EmtbRE $\E'_{1,n}$, the functions  {$\First(\E'_{1,n})$, (Resp. $\Last(\E'_{1,n})$)},  
 and  {\footnotesize$\Follow(x,\E'_{1,n})$}
  can be written as disjoint unions of some  {$\First(\E'_{i,i})$ (Resp. $\Last(\E'_{i,i})$)} sets. Thus,
  the following proposition holds.
  \begin{prop}
   Lets $\E'_{1,n}$ be an EmtbRE and $x\in\pos(\E'_{1,n})$. The functions $\First(\E'_{1,n})$, $\Last(\E'_{1,n})$, and $\Follow(x,\E'_{1,n})$ 
can be computed in $O(|\E'_{1,n}|+n^3)$ time.
  \end{prop}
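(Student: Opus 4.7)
The plan is to build on the previous proposition—which furnishes the table $\{\nul(\E'_{i,j})\}_{1\le i\le j\le n}$ in $O(|\E'_{1,n}|+n^3)$ time—and then to exploit the formulas \eqref{first}, \eqref{last} and \eqref{follow} of Proposition \ref{flf}. By Corollary \ref{mtbNULL}, each of these formulas expresses the target set as a \emph{disjoint} union of terms that are either factor-level Glushkov sets or already-tabulated $\nul$ values.

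First I would run the algorithm of the previous subsection to precompute all of the $\nul(\E'_{i,j})$, paying $O(|\E'_{1,n}|+n^3)$. I would then assume, by structural induction on the syntactic tree of the EmtbRE, that the Glushkov functions of each factor $\E_k$ are already available.

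For $\First(\E'_{1,n})$ I would scan $j$ from $1$ to $n-1$ and, whenever $\nul(\E'_{1,j})=\{\varepsilon\}$, append $\First(\E_{j+1})$ to an accumulator. Since positions coming from different factors are distinct (linearization) and the corollary guarantees the union is disjoint, appending needs no duplicate check, so the total work is bounded by $\sum_{k=1}^{n}|\First(\E_k)| \le |\pos(\E'_{1,n})| \le |\E'_{1,n}|$. The argument for $\Last(\E'_{1,n})$ from formula \eqref{last} is symmetric.

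For $\Follow(x,\E'_{1,n})$ I would locate the unique factor $\E_k$ containing $x$. If $k=n$ or $x\notin\Last(\E_k)$ the answer is simply $\Follow(x,\E_k)$. Otherwise I would compute $\First(\E'_{k+1,n})$ by the same linear scan—using only the tabulated $\nul(\E'_{k+1,j})$ values—and take its disjoint union with $\Follow(x,\E_k)$; this costs $O(|\E'_{1,n}|)$. Summing, the overall cost is $O(|\E'_{1,n}|+n^3)$ as claimed. The main obstacle, absent Corollary \ref{mtbNULL}, would be the disjointness of the unions: without it, the $n-1$ appended sets could overlap and would require genuine set-union operations, pushing the cost of each formula to $\Omega(n\cdot|\E'_{1,n}|)$ rather than collapsing to a single linear pass over the position set.
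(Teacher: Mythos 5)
Your proposal is correct and follows essentially the same route as the paper: the paper justifies this proposition by invoking Corollary~\ref{mtbNULL} (the Glushkov functions decompose as \emph{disjoint} unions of factor-level sets and tabulated $\nul$ values) together with the $O(|\E'_{1,n}|+n^3)$ cost of building the $\nul$ table, which is precisely your argument. You merely spell out the linear-scan accumulation and the reason disjointness lets unions degenerate into appends, details the paper leaves implicit.
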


\subsection{Computation of a c-continuation over a {\ZPC}-structure}
According to Corollary \ref{mtbNULL}, a multi-tilde-bar expression can be viewed as a standard regular expression equipped with
a specific computation for the function $\nul$.
The computation of the Glushkov functions of a multi-tilde-bar expression obviously depends on the definition of the function $\nul$:
for example, an alternative interpretation of the tilde operator can be associated with the following definition of $\nul$: \begin{center}
      $\nul(\begin{pspicture}(0,0)(2,0.35)
\pszigzag[coilwidth=0.07,coilheight=2.2,coilarm=0](0,0.4)(2,0.4)
\uput{0.1cm}[1](0,0.05){$\E_1\cdot \E_2\cdots \E_n$}
\end{pspicture}~~~) = \{\varepsilon\} \Leftrightarrow \big(\varepsilon \in \LL(\E_1)\big)\wedge \big(\varepsilon \in \LL(\E_n)\big)$.
     \end{center}
The \ZPC-structure \cite{ZPC} can be extended to multi-tilde-bar expressions in a natural way (see Figure \ref{zpc}), 
by representing the tilde and bar operators by edges connecting the $'\cdot'$-nodes of the product.
Therefore, all the algorithms based on the \ZPC-structure, {\it i.e.} the construction of the c-continuation automaton \cite{CZ}, of the equation automaton \cite{CZ},
of the follow automaton~\cite{CNZ} and of the weighted position automaton \cite{CLOZ} also work for multi-tilde-bar expressions. \\
Moreover the worst case time complexity in the case of multi-tilde-bar expressions is the worst case time complexity of the standard case augmented with the worst case time complexity of the function $\nul$.
Therefore, the following theorem can be stated.
\begin{thm}
Let $\E'$ be a multi-tilde-bar expression and ${\cal N}$ the worst case time complexity of the function $\nul$.
The position automaton, the c-continuation automaton, the follow automaton and the equation automaton 
associated with $\E'$ can be computed in \\ $O(|\E'|\times||\E'||+{\cal N})$ time.
\end{thm}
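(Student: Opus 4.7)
The plan is to reduce the multi-tilde-bar case to the standard regular case by exploiting two facts already established in the preceding sections. First, by Corollary~\ref{mtbNULL}, the Glushkov functions $\First(\E')$, $\Last(\E')$ and $\Follow(x,\E')$ of a multi-tilde-bar expression are expressible as disjoint unions of the corresponding sets of single-factor subexpressions, modulated only by $\nul(\E'_{i,j})$ values. Second, the \ZPC-structure extends to multi-tilde-bar expressions (see Figure~\ref{zpc}) by representing each pair in $\BB_1^n$ and $\TT_1^n$ by an edge between two $'\cdot'$-nodes of the syntactic tree, which inflates its size only by a quantity already counted in $|\E'|$.

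First I would precompute, in a table indexed by pairs $(i,j)$, all the $\nul(\E'_{i,j})$ values at cost $\mathcal{N}$. I would then run the standard \ZPC-based constructions (of the position automaton \cite{ZPC}, the c-continuation automaton \cite{CZ}, the follow automaton \cite{CNZ}, and the equation automaton \cite{CZ,KOZ}) over the extended structure. At each place where these algorithms consult a nullability value, in particular in the propagation rules underlying $\First$ and $\Follow$, a constant-time look-up into the precomputed table replaces the standard constant-time test. Since the disjoint-union formulas of Proposition~\ref{flf} are structurally identical to their standard counterparts once $\nul$ is tabulated, every \ZPC-traversal step still runs in the same amortized cost as in the ordinary regular case.

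The complexity count is then immediate: the standard \ZPC-based constructions run in $O(|\E'|\times ||\E'||)$ time, a bound which absorbs the auxiliary edges coming from $\BB_1^n$ and $\TT_1^n$ since their total number is linear in $|\E'|$. Adding the precomputation cost of the $\nul$-table contributes the additive term $\mathcal{N}$, yielding the announced $O(|\E'|\times ||\E'||+\mathcal{N})$ bound for each of the four automata.

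The main obstacle will be verifying that no step of the \ZPC-based constructions secretly triggers an on-the-fly recomputation of $\nul$ over a subexpression that has not been tabulated, and that the extra edges encoding tildes and bars can be traversed within the claimed amortized cost. This reduces to a careful rereading of the propagation schemes in \cite{ZPC,CZ,CNZ}: at each visit to a $'\cdot'$-node along a product of length $n$, one must check that the information needed about the surrounding multi-tilde-bar annotations is obtained by a constant number of look-ups into the $\nul$-table. Corollary~\ref{mtbNULL}, combined with the disjoint-union formulas of Proposition~\ref{flf}, furnishes exactly the invariant required, so the argument goes through and the theorem follows.
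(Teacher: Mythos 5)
Your proposal follows essentially the same route as the paper, which likewise argues that the \ZPC-structure extends to multi-tilde-bar expressions by adding edges between the $'\cdot'$-nodes for the tilde and bar operators, so that the standard constructions carry over with the complexity of the ordinary case augmented by the cost $\mathcal{N}$ of computing $\nul$. Your version is in fact somewhat more explicit than the paper's brief justification, since you spell out the tabulation of the $\nul(\E'_{i,j})$ values and the constant-time look-ups, but the underlying argument is the same.
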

The computation of a c-continuation through a \ZPC-structure is illustrated by the following example.

\begin{example}~Let us consider the following EmtbRE: \smallskip

\hspace*{1cm}{\small\begin{pspicture}(0,0)(2,0.6)
\psline[linewidth=0.9pt,linestyle=solid]{}(2.4,0.3)(5.2,0.3)
\psline[linewidth=0.9pt,linestyle=solid]{}(3.8,0.4)(6.6,0.4)

\pszigzag[coilwidth=0.064,coilheight=2.5,coilarm=0](2.4,0.55)(7.2,0.55)
\uput{0.5cm}[1](0,0){$\E'_{1,6}=\Bigg( a_1\cdot (b_2+\varepsilon) \cdot (c_3+\varepsilon)\cdot (d_4+\varepsilon)\cdot e_5\cdot
       f_6^*\Bigg)^*$}
\end{pspicture}}. \bigskip

Let us explain how to compute the c-continuation of $\E'$ associated with some position $x$, denoted by $c_{x}(\E')$.
The \ZPC-structure of $\E'$ is partially shown in Figure \ref{zpc}, with all the links which are necessary to computes $c_{a_1}(\E')$ and $c_{b_2}(\E')$ .\\
 On the right-hand side, the standard $\First$ tree is added with blue (resp. green) links between some $'\cdot'$-nodes  
 which represent bar (resp. tilde) operators over factors of $\E$.
 The edge connecting any $'\cdot'$-node to its right son is marked by the value of the function $\nul$ associated with
 its left son, and all other edges are marked by $\varepsilon$.\\
 On the left-hand side, the standard $\Last$ tree is added with blue (resp. green) links between $'\cdot'$-nodes  
 which represent bar (resp. tilde) operators over factors of $\E$.
 The edge connecting any $'\cdot'$-node to its left son is marked by the value of the function $\nul$ associated with
 its right son, and all other edges are marked by $\varepsilon$.\\
 The two trees are connected by the so-called $\Follow$ links (red links).
 For each $'\cdot'$-node,
 there is a Follow link going from its left son in the $\Last$ tree to its right son in the $\First$ tree,
 and for each $\star$-node,
 there is a Follow link going from its son in the $\Last$ tree to the $\star$-node itself in the $\First$ tree.\vspace*{1cm}

\begin{figure}
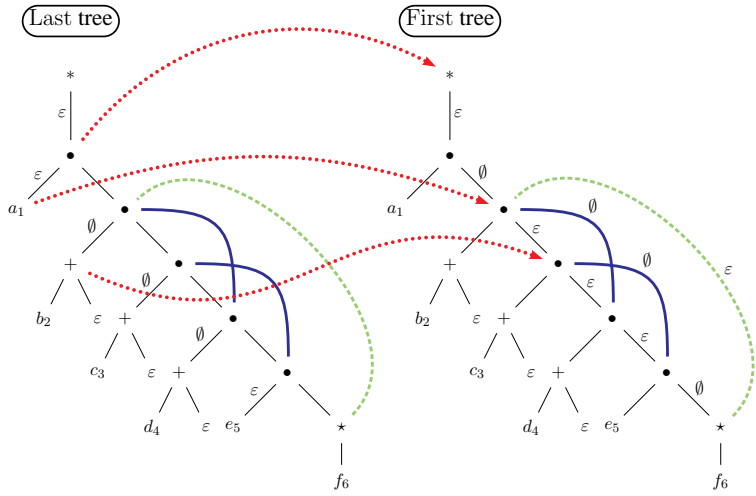


\scalebox{0.6}{\VCDraw{
\begin{VCPicture}{(4,21)(30,11)}
\LargeState\StateVar[\mbox{\Large$\Last$ tree}]{(10,23)}{x}
\LargeState\StateVar[\mbox{\Large$\First$ tree}]{(24,23)}{x}
\SetStateLineStyle{none}
\State[*]{(10,21)}{a}

\State[\bullet]{(10,18)}{c}

\State[a_1]{(8,16)}{e}
\State[\bullet]{(12,16)}{f}

\State[+]{(10,14)}{g}
\State[\bullet]{(14,14)}{h}

\State[b_2]{(9,12)}{i}              
\State[\varepsilon]{(11,12)}{j}
\State[+]{(12,12)}{k}
\State[\bullet]{(16,12)}{l}

\State[c_3]{(11,10)}{m}
\State[\varepsilon]{(13,10)}{w}
\State[+]{(14,10)}{n}
\State[\bullet]{(18,10)}{o}

\State[d_4]{(13,8)}{q}
\State[\varepsilon]{(15,8)}{p}
\State[e_5]{(16,8)}{r}
\State[\star]{(20,8)}{s}

\State[f_6]{(20,6)}{t}


\SetStateLineStyle{none}
\State[*]{(24,21)}{A}

\State[\bullet]{(24,18)}{C1}

\State[a_1]{(22,16)}{E}
\State[\bullet]{(26,16)}{F}

\State[+]{(24,14)}{G}
\State[\bullet]{(28,14)}{H}

\State[b_2]{(23,12)}{I}              
\State[\varepsilon]{(25,12)}{J}
\State[+]{(26,12)}{K}
\State[\bullet]{(30,12)}{L}

\State[c_3]{(25,10)}{M}
\State[\varepsilon]{(27,10)}{W}
\State[+]{(28,10)}{N}
\State[\bullet]{(32,10)}{O}

\State[d_4]{(27,8)}{Q}
\State[\varepsilon]{(29,8)}{P}
\State[e_5]{(30,8)}{R}
\State[\star]{(34,8)}{S}

\State[f_6]{(34,6)}{T}
\SetEdgeArrowStyle{-}

\EdgeR{a}{c}{\varepsilon}

\EdgeR{c}{e}{\Large\varepsilon}
\EdgeL{c}{f}{}
\EdgeR{f}{g}{\emptyset}
\EdgeL{f}{h}{}

\EdgeR{g}{i}{}
\EdgeL{g}{j}{}
\EdgeR{h}{k}{\emptyset}
\EdgeL{h}{l}{}

\EdgeL{k}{m}{}
\EdgeL{k}{w}{}
\EdgeR{l}{n}{\emptyset}
\EdgeL{l}{o}{}

\EdgeL{n}{q}{}
\EdgeL{n}{p}{}

\EdgeR{o}{r}{\varepsilon}
\EdgeL{o}{s}{}

\EdgeL{s}{t}{}



\SetEdgeArrowStyle{-}

\EdgeL{A}{C1}{\Large\varepsilon}

\EdgeL{C1}{E}{}
\EdgeL{C1}{F}{\emptyset}
\EdgeL{F}{G}{}
\EdgeL{F}{H}{\Large\varepsilon}

\EdgeL{G}{I}{}
\EdgeL{G}{J}{}
\EdgeL{H}{K}{}
\EdgeL{H}{L}{\Large\varepsilon}

\EdgeL{K}{M}{}
\EdgeL{K}{W}{}
\EdgeL{L}{N}{}
\EdgeL{L}{O}{\Large\varepsilon}

\EdgeL{N}{Q}{}
\EdgeL{N}{P}{}

\EdgeL{O}{R}{}
\EdgeL{O}{S}{\emptyset}

\EdgeL{S}{T}{}

\SetEdgeLineColor{YellowGreen}
\SetEdgeLineWidth{1pt}


\SetEdgeLineStyle{dashed}
\SetEdgeLineWidth{3pt}
\VArcL{arcangle=-90,ncurv=.8}{s}{f}{}
\VArcR{arcangle=-90,ncurv=.8}{S}{F}{\varepsilon}


\SetEdgeLineColor{Blue}
\SetEdgeLineWidth{3pt}
\SetEdgeLineStyle{solid}
\VArcL{arcangle=45,ncurv=1.2}{h}{o}{}
\VArcL{arcangle=45,ncurv=1.2}{f}{l}{}
\VArcL{arcangle=45,ncurv=1.2}{H}{O}{\emptyset}
\VArcL{arcangle=45,ncurv=1.2}{F}{L}{\emptyset}
\SetEdgeLineColor{Red}
\SetEdgeArrowStyle{->}
\SetEdgeArrowWidth{10pt}
\SetEdgeLineStyle{dotted}
\SetEdgeLineWidth{4pt}
\VArcL{arcangle=40}{c}{A}{}
\VArcL{arcangle=20}{e}{F}{}
\VCurveR{angleA=-25,angleB=160}{g}{H}{}
\end{VCPicture}
}
}
\caption{The \ZPC-structure associated with the multi-tilde-bar expression $\E'_{1,6}$.}\label{zpc}
\end{figure}
\vspace*{-0.5cm}
The computation of a c-continuation using a \ZPC-structure can be done in a similar way as in the standard case.
Let $<(l_1,r_1), (l_2,r_2), ...,(l_k,r_k)>$ the list of follow links in the path going from a position $x$ to the root of the Last tree.
Let us denote by $\f_{i}$ the subexpression associated with the node $r_i$ in the First tree. Then the c-continuation $c_x$ associated with $x$ 
is the expression $\f_{1}\cdots \f_{k}$. In our example we have:\\

{\footnotesize\hspace*{2cm} \begin{tabular}{l}
\begin{pspicture}(0,-0.2)(2,0.6)
\psline[linewidth=0.9pt,linestyle=solid]{}(2.1,0.3)(4.5,0.3)
\psline[linewidth=0.9pt,linestyle=solid]{}(3.2,0.4)(5.8,0.4)

\pszigzag[coilwidth=0.064,coilheight=2.5,coilarm=0](2,0.55)(6.4,0.55)
\uput{0.5cm}[1](0,0){$c_{a_1}(\E)=\Big((b_2+\varepsilon) \cdot (c_3+\varepsilon)\cdot (d_4+\varepsilon)\cdot e_5\cdot
       f_6^*\Big)\cdot \E'_{1,6}$}
\end{pspicture} 
\\
\begin{pspicture}(0,-0.1)(2,0.4)

\psline[linewidth=0.9pt,linestyle=solid]{}(2.1,0.3)(4.5,0.3)

\uput{0.5cm}[1](0,0){$c_{b_2}(\E)=\Big((c_3+\varepsilon)\cdot (d_4+\varepsilon)\cdot e_5\cdot f_6^*\Big)\cdot \E'_{1,6}$}

\end{pspicture}
\end{tabular}
}
\end{example}

\section{Conclusion} 
In this paper, we give some answers to open questions raised in \cite{CCMacta}.
First, we formalize an explicit definition of the language associated with a multi-tilde-bar 
expression, which allows us to give a recursive computation of its Glushkov functions. 
Next, we show that the worst case time complexity to construct the position automaton depends on the worst case time complexity of the function $\nul(\E)$. This function
can straightforwardly be replaced by another type of function in order to control the application of each tilde or bar. 
Last, we provide an algorithm to convert a multi-tilde-bar expression into its position automaton, with a cubic worst case time complexity  with respect to the size of the multi-tilde-bar expression.


\end{document}